\newcommand{\R}{\mathbb R}
\newcommand{\n}{\noindent}
\newcommand{\f}{\frac}
\newcommand{\vs}{\vspace{0.5cm}}
\newcommand{\ba}{\begin{eqnarray}} 
\newcommand{\ea}{\end{eqnarray}}
\newcommand{\bdm}{\begin{displaymath}}
\newcommand{\edm}{\end{displaymath}}
\newcommand{\bdn}{\begin{eqnarray}}
\newcommand{\edn}{\end{eqnarray}}
\newcommand{\bay}{\begin{array}{c}}
\newcommand{\eay}{\end{array}}
\newcommand{\ben}{\begin{enumerate}}
\newcommand{\een}{\end{enumerate}}
\newcommand{\beq}{\begin{equation}}
\newcommand{\eeq}{\end{equation}}
\newcommand{\be}{\begin{equation}}
\newcommand{\ee}{\end{equation}}
\newcommand{\mg}{\mathcal{G}^{\lambda}} 
\newcommand{\mgm}{\mathcal{G}^{\mu}}
\newcommand{\x}{{\bf x}}
\newcommand{\y}{{\bf y}}
\newcommand{\p}{{\bf p}}
\renewcommand{\k}{{\bf k}}
\newcommand{\s}{{\bf s}}
\newtheorem{pro}{Proposition}[section]
\newcommand{\ve}{\varepsilon}
\newcommand{\bC}{{\mathbb C}}
\newcommand{\donothing}[1]{}
\title{On the Hamiltonian for three bosons \\with point interactions}
\author{Rodolfo Figari$^1$, Alessandro Teta$^2$ \\
\\
$^1$INFN, Sezione di Napoli, Complesso Universitario di Monte S. Angelo,\\
Via Cintia Edificio 6, 80126 Napoli, Italy\\
\\
$^2$Dipartimento di Matematica G. Castelnuovo, Sapienza Universit\`a di Roma,\\ 
Piazzale  Aldo Moro 5,  00185 Roma, Italy}
\let\orgdescriptionlabel\descriptionlabel
\renewcommand*{\descriptionlabel}[1]{%
  \let\orglabel\label
  \let\label\@gobble
  \phantomsection
  \edef\@currentlabel{#1}%
  \let\label\orglabel
  \orgdescriptionlabel{#1}%
}
\begin{document}
\maketitle

\hfill {\em Dedicated to Sergio Albeverio}

\vs

\begin{abstract} We briefly summarize the most relevant steps in the search of rigorous results  about the properties of  quantum systems made of three bosons interacting with zero-range forces. We also describe recent attempts to solve the unboundedness problem of point-interaction Hamiltonians for a three-boson system, keeping unaltered the spectrum structure at low energies.
\end{abstract}

\section{Introduction}

In the unbounded scientific production of Sergio Albeverio the unboundedness of the zero-range Hamiltonians for three bosons and the Efimov effect play a very special role. First of all, as he pointed out in \cite{alfi}, because of the connection of the unboundedness problem with the existence of a non-trivial self-interacting relativistic quantum field theory, which was one of the main interest of his early scientific career. 
On the other hand, the peculiar structure of the spectrum at low negative energies of such Hamiltonians  (suggesting the existence of the so called Efimov trimers) was for him a challenge to analize rigorously the peculiar discrete scaling of the eigenvalues of zero-range multi-particle Hamiltonians.
The physics and mathematical-physics literature on the three-boson quantum system and the Efimov effect is nowadays so extensive that we cannot claim that the reader will find in this contribution a thorough summary of the subject and a comprehensive list of references. We will try first to describe the different attempts made to solve the unboundedness problem in a rigorous way and we will mention some clever suggestions about the solution of the problem which were never followed up with final results. On the basis of those suggestions, we tried recently to define zero-range Hamiltonians for the system of three bosons either breaking the rotational symmetry or adding three-particle contributions. We outline here methodology and results of this work-in-progress, examining in  particular the effect of the three-particle interaction, suggested in the past by Minlos, Faddeev, Albeverio, Hoegh-Krohn and Wu,  on the spectral structure of the Hamiltonians.\\
We want first to give an outline of how the quantum three-body problem appeared in the physical literature.

\section{The Thomas paper}

\noindent
In 1935, on G.E. Uhlenbeck's suggestion, L.H. Thomas investigated the interaction between a neutron and a proton in order to analize the structure of Tritium nucleus  \cite{TH}. He made the assumption of negligible interaction between the two neutrons and examined how short the neutron-proton interaction could be. He proved that the energy of the system was not bounded below for shorter and shorter interaction range. His conclusions are clearly stated in the abstract of the paper:  ``... .We conclude that: either two neutrons repel one another by an amount not negligible compared with the attraction between a neutron and a proton; or that the wave function cannot be symmetrical in their positions; or else that the interaction between a neutron and a proton is not confined within a relative distance very small compared with $10^{-13} cm$ '' .\\
Notice that few years later  the connection between spin and statistics was finally clarified. It then became clear that the wave function of the two neutrons could not be symmetrical under the exchange of their positions. Nevertheless, Thomas result  indicated that zero-range interactions, while perfectly defined for a two-particle system, allowing a finite energy for the ground state, seem to give Hamiltonians unbounded from below in the three-boson case. Since then, this effect of ``falling to the center'' of the quantum three-body system with zero-range interactions is known as Thomas effect (or Thomas collapse).\\
\noindent
In the paper, Thomas studied, in the center of mass reference frame, the eigenvalue problem for the free Hamiltonian outside a small region around the origin, where the three particles occupy the same position. He found that for negative energy values it was possible to exhibit  a square integrable solution showing singularities on the planes where the two neutron positions coincide. The solution was successively used as a test function to show that the energy of the ground state of the system was unbounded from below if the range of the interaction (a sum of two-body potentials or a singular boundary condition) was made shorter and shorter.\\
In particular, let $\x_{1}, \x_{2}$ the position of the neutrons and $\x_{3}$ the position of the proton. In the center of mass reference frame let $\s_{i} = \x_{i}  -\x_{3}$ for $i =1,2$ be the relative coordinates of the neutrons with respect to the proton. The free Hamiltonian eigenvalue equation then reads

\begin{equation}
- \left(  \,\f{\hbar}{4 \pi^{2}m} \right) (\Delta_{\s_{1}} +  \Delta_{\s_{2}} + 
\nabla_{\s_{1}} \cdot 
\nabla_{\s_{2}}) \Psi (\s_{1}, \s_{2}) + \mu^{2} \Psi (\s_{1}, \s_{2}) = 0\,\,\,\,\,\,\,\,\,\,\,\,\,\,\, \mu > 0\,.
\label{thomas1}
\end{equation}

\noindent
Thomas found that a singular solution of (\ref{thomas1}) is given by

\begin {equation}
\Psi (\s_{1}, \s_{2})  = \frac{1}{s^{2}} K_{0} (\eta s) \left[\frac{\frac{\pi}{2} - \arctan\xi_{1}}{\xi_{1} (1 + \xi_{1}^{2})} + \frac {\frac{\pi}{2} - \arctan\xi_{2}}{\xi_{2} (1 + \xi_{2}^{2})}  \right]
\label{thomas2}
\end{equation}

\noindent
where
\begin{equation}
 s^{2} = |\s_{1}|^{2} + |\s_{2}|^{2} - \s_{1} \cdot \s_{2}; \,\,\,\,\,\,\,\,\, \xi_1 = \frac{|\s_1|}{|\s_1 - 2 \s_2|}; \,\,\,\,\,\,\,\,\,\,\,\,\,\,  \xi_2 = \frac{|\s_2|}{|\s_2 - 2 \s_1|},
\end{equation}
$K_{0} (x)$ is the zero-th Macdonald function of integer order (see e.g.\cite{er}) and $\displaystyle \eta = \left(\f{4 \pi^{2}m }{\hbar}\right)^{1/2}\mu$.\\

\noindent
 We list few important features of the solution.  

\noindent
The behaviour of the solution close to the coincidence plane $\{\s_{1} = 0\}$ when  $|\s_{2}| > 0 $ is 
\begin{equation}
\Psi (\s_{1}, \s_{2}) \approx  \frac{1}{|\s_{1}|} \left(\frac{\pi}{\sqrt{3}}\frac{1}{ |\s_{2}|} K_{0} (\eta |\s_2|) \right).
\end{equation}

\noindent
Symmetrically, when $|\s_{1}| \cong 0$ e $|\s_{2}| > 0 $ 
\begin{equation}
\Psi (\s_{1}, \s_{2}) \approx  \frac{1}{|\s_{2}|} \left(\frac{\pi}{\sqrt{3}}\frac{1}{ |\s_{1}|} K_{0} (\eta |\s_1|) \right).
\end{equation}

\noindent
Moreover, the following scale transformation send eigenvectors in eigenvectors with different $\mu$

\begin{equation}
\Psi_{\lambda} (\s_{1}, \s_{2}) \equiv \lambda^{3} \Psi (\lambda \s_{1}, \lambda \s_{2}) =  \frac{\lambda}{s^{2}} K_{0} (\lambda \eta s) \left[\frac{\frac{\pi}{2} - \arctan\xi_{1}}{\xi_{1} (1 + \xi_{1}^{2})} + \frac {\frac{\pi}{2} - \arctan\xi_{2}}{\xi_{2} (1 + \xi_{2}^{2})}  \right]
\end{equation}
with $\lambda > 0$ and $\|\Psi_{\lambda}\|_2    = \|\Psi\|_2$.

\noindent
Notice that the singularity shown by the solution close to the coincidence planes are the same of the potential of a density charge distributed on the planes. As we will see this behaviour is typical of the functions in the domain of the zero-range interaction Hamiltonians suggested by Skorniakov, Ter-Martirosian \cite{Tms} and Danilov \cite{dan}  many years after the work of L. H. Thomas.

\section{Zero-range interaction Hamiltonians}

\n
It is well known that for system of quantum particles in $\R^3$ at low temperature one has that the thermal wavelength 
\[
\lambda =\f{h}{p} \; \simeq \;  \hbar \,\sqrt{\f{2 \pi}{m k_B T}} 
\]
is much larger than the range of the two-body interaction. 
Under these conditions the system exhibits a universal behavior, 
i.e., relevant observables do  not depend on the details of the two-body interaction, but only on a single physical parameter known as 
 the scattering length 
 \[
 a := \lim_{|\k|\rightarrow 0} f(\k,\k')
\]
where $ f(\k,\k')$ is the scattering amplitude of the two-body scattering process. 
In this regime it is natural to expect that the qualitative behavior of the system of particles in $\R^3$ is well described by the (formal) effective Hamiltonian with zero-range interactions 
\be\label{forh}
  ``H = -\sum_{i} \f{1}{2m_i} \Delta_{\x_i} + \sum_{i<j} a_{ij} \delta (\x_i - \x_j) \, \text{''}
\ee
where $m_i$ is the mass of the $i$-th particle and $a_{ij}$ is the scattering length associated to the scattering process between the particles $i$ and $j$. Hamiltonians of the type (\ref{forh}) are widely used in the physical literature to investigate, e.g., the behaviour of ultra-cold gases (see, e.g., \cite{bh}, \cite{cmp}, \cite{ct}, \cite{cw}, \cite{km}, \cite{ne}, \cite{wc1}, \cite{wc2}).

\n
From the mathematical point of view the first step is to give a rigorous meaning to \eqref{forh}. We define the Hamiltonian for the system of $n$ particles with two-body zero-range interactions as a non trivial  self-adjoint (s.a.) extension in $L^2(\R^{3n})$ of the operator
\be\label{fre0}
 -\sum_{i} \f{1}{2m_i} \Delta_{\x_i}
\ee
 defined on $H^2$-functions vanishing on each hyperplane $\{\x_i =\x_j\}$.

\n
The complete characterization of  such Hamiltonians can be  obtained in the simple case $n=2$. Indeed, in the center of mass reference frame, denoting with  $\x$  the relative coordinate, one has to consider the operator in $L^2(\R^3)$
\be
\tilde{h}_0 = - \f{1}{2 m_{12}} \Delta_{\x} \,, \;\;\;\;\;\;\;\; m_{12}=\f{m_1 m_2}{m_1 + m_2} 
\ee
defined on $H^2$-functions vanishing at the origin. Such  operator has defect indices $(1,1)$ and the one-parameter family of s.a. extensions $h_{\alpha}$, $\alpha \in \R$,  can be explicitly constructed. Roughly speaking,  $h_{\alpha}$ acts as the free Hamiltonian outside the origin and  $\psi \in D(h_{\alpha})$  satisfies the (singular) boundary condition at the origin

\be\label{bc1d}
\psi(\x) = \f{q}{|\x|} \; +\; \alpha\,  q \;+\;  o(1) \;\;\;\;\; \text{for}\;\;\; |x| \rightarrow 0
\ee

\n
where $q$ is a constant depending on $\psi$ and $\alpha=a^{-1}$. Moreover, the resolvent  $(h_{\alpha}-z)^{-1}$, $z \in \bC \setminus \R$, can be explicitly computed and all the spectral properties of $h_{\alpha}$ characterized (see \cite{al}, \cite{al4}).

\n
For systems of $n$ particles, with $n>2$, the construction is much more difficult because of the presence  of infinite dimensional defect spaces (see, e.g., \cite{al2}, \cite{al3}, \cite{fcft1}, \cite{bft}, \cite{bt}, \cite{CDFMT}, \cite{CDFMT2}, \cite{cft}, \cite{DFT}, \cite{dim2}, \cite{FT}, \cite{gl1}, \cite{gl2}, \cite{mp}, \cite{ms}, \cite{MM}, \cite{m1}, \cite{m3}, \cite{m4}, \cite{mf1}, \cite{mf2}, \cite{ms1}, \cite{ms2}, \cite{th}).  To simplify notation, we describe the problem, and some previous attempts to solve it, in the case of  three identical bosons with masses $1/2$,  in the center of mass reference frame. 

\n
Let $\x_1, \x_2$ and $ \x_3 = -\x_1 - \x_2$ be the cartesian coordinates of the particles. 
Let us introduce the Jacobi coordinates   
\beq
\x  = \x_2 - \x_3\,, \;\;\;\;\;\; \y =\f{1}{2} (\x_2 + \x_3) - \x_1\,
\eeq

\n
with inverse given by $
\x_1= -\f{2}{3} \y\,, \;\;  \x_2=\f{1}{2} \x+ \f{1}{3} \y \,, \;\;\; \x_3=- \f{1}{2} \x+ \f{1}{3} \y\,. $ 
Due to the symmetry constraint, the Hilbert space of states is 
\beq\label{hspa}
L^2_s(\R^6) \!=\! \Big\{ \psi \in L^2(\R^6)\,  \;\text{s.t.}\;  \psi(\x,\y)= \psi(-\x,\y)= 
\psi\Big( \f{1}{2} \x + \y,  \f{3}{4} \x-\f{1}{2} \y   \Big) \Big\}
\eeq

\n
and the formal Hamiltonian reads

\beq\label{fham}
 ``H= - \Delta_{\x} - \f{3}{4}\Delta_{\y} +  a\delta(\x) +  a\delta (\y -  \x/2) +  a\delta (\y +  \x/2)\, \text{''},
\eeq

\n
i.e., $H$ is a perturbation of the free dynamics in $\R^6$ supported by the three-dimensional hyperplanes 
\be
\Sigma=\{\x=0\} \cup  \{\y-\x/2=0\} \cup \{\y+\x/2=0\}\,. 
\ee
According to our mathematical definition, a s.a. Hamiltonian in $L^2_s(\R^6)$  corresponding to the formal operator $H$ is  a non trivial s.a. extension of the operator
\begin{align}\label{h0p}
&\tilde{H}_0 =- \Delta_{\x} - \f{3}{4}\Delta_{\y}\,, \;\; D(\tilde{H}_0)=\Big\{  \psi \in L^2_s(\R^6) \; \text{s.t.}\; \psi \in H^2(\R^6) ,  \;\psi\big|_{\Sigma} =0 \Big\}\,.
\end{align}
\n
As we already mentioned the defect indices are now infinite  and the problem arises of how to choose and characterize a class of s.a. extensions with the right physical properties.

\n
An apparently reasonable choice based on the analogy with the case $n=2$ is due to Ter-Martirosian and Skorniakov \cite{Tms}. Indeed, they defined an operator $H_{\alpha}$ acting as  the free Hamiltonian outside the hyperplanes  and satisfying a boundary condition at the hyperplanes. Specifically, they impose

\beq\label{bc23}
\psi(\x,\y) = \f{\xi(\y)}{|\x|} + \alpha\, \xi(\y) + o(1) \,, \;\;\;\; \text{for}\;\; |\x|\rightarrow 0\, \;\; \text{and}\;\;\y \neq 0
\eeq

\n
where $\xi$ is a function depending on $\psi$. Due to the bosonic symmetry, the same conditions  for  $|\y-\x/2| \rightarrow 0$ and $|\y+\x/2| \rightarrow 0$ have to be satisfied.

\n
We note that the first term in the right-hand side of (\ref{bc23}) coincides with the first term of the asymptotic expansion of the potential produced by the charge density $\xi$ distributed on $\{ \x = 0 \}$. Therefore, an equivalent way to describe a wave function $\psi$ in the domain of $H_{\alpha}$ is the following. Any $\psi \in D(H_{\alpha})$ can be decomposed as 
\be
 \psi=w^{\lambda} + \mg \xi\,, \;\;\;\;\;\; w^{\lambda} \in H^2(\R^6)
 \ee
 where $\lambda>0$ and 
\beq
 \widehat{\mg \xi}  (\k,\p)  =
\sqrt{\f{2}{\pi}} \, \, \f{\hat{\xi}(\p)  + \hat{\xi}(\k - \f{1}{2} \p) + \hat{\xi}(-\k - \f{1}{2} \p  )    }{|\k|^2 + \f{3}{4} |\p|^2 +\lambda}\,. 
\eeq
Note that the function $\mg \xi(\x,\y)$ has the asymptotic behaviour
\begin{align}\label{abG}
\mathcal G^{\lambda} \xi  (\x,\y)=& \f{\xi(\y)}{|\x|} - \f{1}{(2\pi)^{3/2}} \int\!\!d\p\, e^{i \p \cdot \y} 
\big( T^{\lambda} \hat{\xi} \,\big) (\p)+o(1) 
\end{align}

\n
 for $|\x|\rightarrow 0$ and $\y \neq 0$, where
\beq\label{opga}
\big(T^{\lambda} \hat{\xi} \,\big) (\p) :=   \sqrt{\f{3}{4} |\p|^2 \!+\! \lambda}  \,\, \hat{\xi}(\p) - \f{1}{\pi^2} \!\!\int\!\! d\p'\, \f{ \hat{\xi}(\p')}{|\p|^2 + |\p'|^2 + \p\cdot \p' +\lambda} \,.
\eeq
We will refer to the first and second term in (\ref{opga}) respectively as the diagonal and the non-diagonal part of $T^{\lambda}$.\\
Therefore the boundary condition \eqref{bc23} is now rewritten as 
\be\label{bc22}
\alpha \,\hat{\xi}(\p)+\big( T \,\hat{\xi} \,\big) (\p) \;=\;  (w^{\lambda}\big|_{\x =0})^{\!\wedge} (\p)\,.
\ee
There is an ambiguity in the above definition since the domain of the symmetric and unbounded operator $T^{\lambda} $ in $L^2(\R^3)$ is not specified. As a first attempt one can choose 
\be
D(T^{\lambda} )= \{ \hat{\xi}\in L^2(\R^3) \;|\; \int\!\! d\k\, |\k|^2 |\hat{\xi} (\k)|^2 < \infty \} \equiv \{\hat{\xi} \;| \; \xi \in H^1(\R^3) \}\,.
\ee
Note that for $\hat{\xi} \in D(T^{\lambda} )$  both terms in the r.h.s.  of \eqref{opga} belong to $L^2(\R^3)$ (see, e.g., \cite{FT}). 

\n
As a matter of fact, the operator $H_{\alpha}$ defined in this way is symmetric but not  s.a. and it turns out that its s.a.  extensions are all unbounded from below. This fact, first noted by Danilov \cite{dan}, was rigorously analyzed by Minlos and Faddeev in  \cite{mf1}, \cite{mf2}.


\section{{\bf Minlos and Faddeev contributions}}

In the first  of their seminal contributions (see \cite{mf1}) Minlos and Faddeev consider a system of three bosons and approach the  general mathematical problem to give a meaning to the formal Hamiltonian with zero-range interactions.  Working in momentum space and using the  theory of s.a. extensions of semibounded operators developed by Birman \cite{bir}, they obtain the abstract   characterization of all the s.a. extensions of the operator \eqref{h0p}. 

\n
Then they observe  that the s.a. extensions of the operator $H_{\alpha}$ in $L^2_s(\R^6)$ introduced by Ter-Martirosian and Skornyakov are in one-to-one correspondence with the s.a. extensions of the operator $T^{\lambda}$ in $L^2(\R^3)$.  Such  operator $T^{\lambda} $  defined on $D(T^{\lambda} )$ has defect indices $(1,1)$ and they find that a  s.a. extension $T^{\lambda}_{\beta}$, $\beta \in \R$, of $T^{\lambda} $ is defined on 
\begin{align}\label{dtb}
D(T^{\lambda}_{\beta}) = \Big\{ \hat{\xi} \in L&^2(\R^3) \;|\; \hat{\xi}= \hat{\xi}_1 + \hat{\xi}_2, \; \hat{\xi}_1 \in D(T^{\lambda}), \; \nonumber\\
& \hat{\xi}_2(\k)=  \f{c}{|\k|^2 +1 } \Big( \beta \sin \big(s_0 \log |\k| \big) + \cos \big( s_0 \log |\k|\big) \Big)\Big\}
\end{align}

\n
where $c$ is an arbitrary constant and $s_0$ is the positive solution of the equation

\be
1-\f{8}{\sqrt{3}}\, \, \f{ \sinh \f{\pi s}{6} }{s \,\cosh \f{\pi s}{2}}=0\,.
\ee

\n
One can observe that both the diagonal and the non diagonal parts of $T^{\lambda} $ diverge on functions with the asymptotic behaviour  of $\hat{\xi}_2$ in \eqref{dtb} for $|\k|\rightarrow \infty$, but their sum remains finite.

\n
Given the s.a. operator $T^{\lambda}_{\beta}$, $D(T^{\lambda}_{\beta})$,  one obtains the s.a. extension $H_{\alpha, \beta}$ (also called Ter-Martirosian, Skornyakov Hamiltonian) of $H_{\alpha}$
\begin{align}
&D(H_{\alpha,\beta}) = \Big\{\psi \in L_s^2(\R^6) \;|\; \psi=w^{\lambda} + \mg \xi, \; w^{\lambda} \in H^2(\R^6), \; \hat{\xi} \in D(T^{\lambda}_{\beta}), \label{dhab}\nonumber\\
&\;\;\;\;\;\;\;\;\;\;\; \;\;\;\;\;\;\;\;\; \alpha \,\hat{\xi}(\p)+\big( T^{\lambda}  \hat{\xi} \,\big) (\p) \;=\;  (w^{\lambda}\big|_{\x =0})^{\wedge} (\p) \Big\}, \\
&(H_{\alpha,\beta} + \lambda) \psi = (H_0 + \lambda ) w^{\lambda}\,,\label{ahab} 
\end{align}
where
\be
H_0 =- \Delta_{\x} - \f{3}{4}\Delta_{\y}\,, \;\;\;\;\;\;\;\; D(H_0)=H^2(\R^6)\,.
\ee

\n
Roughly speaking, $\beta$ parametrizes a further boundary condition satisfied at the triple coincidence point $\x_1=\x_2=\x_3=0$. 
Therefore, it can be considered as  the strength of a sort of an additional three-body force acting on the particles when all their positions coincide.

\n
The authors conclude claiming that some further results on the spectrum of the Hamiltonian $H_{\alpha, \beta}$  hold. In particular, they affirm that  $H_{\alpha, \beta}$ has the unphysical instability  property already noted by Danilov, i.e., that there exists an infinite sequence  of negative eigenvalues
\[
E_n \rightarrow -\infty \,, \;\;\;\; \text{as } \;\; n \rightarrow \infty\,.
\]
The rigorous proof of this fact is contained  in  their second paper on the subject  and it will be described below.

\n
At the end of the paper  one finds an interesting remark on the possibility to define a modified Hamiltonian satisfying the stability property, i.e., bounded from below. The authors say: 
\say{ We note that this last result (i.e., the instability property) somewhat discredits our chosen extension, since probably only semibounded energy operators are of interest in nonrelativistic quantum mechanics. 
It seems to us that there must exist among the other extensions of the operator $\tilde{H}_0$ semibounded extensions which have all the properties of the model of Ter-Martirosian and    Skornyakov  that are good from the physical point of view, namely the properties of locality and of the correct character of the continuous spectrum. 
Evidently such extensions will be obtained ... } if one replaces the constant $\alpha$ in \eqref{bc22} (or equivalently in \eqref{bc23}) with the operator  $\alpha_M$ in the Fourier space defined by
\be\label{alk}
(\alpha_M \hat{\xi})(\p)= \alpha \hat{\xi} (\p)+ (K \hat{\xi})(\p)
\ee

\n
where $\alpha \in \R$ and $K$ is a convolution operator with a kernel $K(\p-\p')$  having the asymptotic behavior

\be\label{defk}
K(\p) \sim \f{\gamma}{\,|\p|^2} \,, \;\;\;\;\; \text{for} \;\;\;\; |\p| \rightarrow \infty 
\ee

\n
with the constant $\gamma$ satisfying 
\be\label{bgamma}
\gamma > \f{1}{\pi^3} \Big( \f{4\pi}{3 \sqrt{3}} -1 \Big) \,.
\ee
Unfortunately, they conclude: 
\say{A detailed development of this point of view is not presented here because of lack of space.}

\n
We believe that  such suggestion is interesting and we find it rather strange that the idea has never been developed in the literature. We also  observe that it is not so evident that the replacement of the constant $\alpha$ with the operator $\alpha_M$ defined in \eqref{alk}, \eqref{defk}, \eqref{bgamma} produces a semibounded Hamiltonian and, moreover, it is not clear the physical meaning of such replacement. We shall come back to this point in the next section. 

\n
Here we continue the analysis of the contribution of Minlos and Faddeev discussing the content of their second paper on the subject (\cite{mf2}), where the authors show that  the Hamiltonian $H_{\alpha, \beta}$ has an infinite number of eigenvalues accumulating both at zero and at $-\infty$.   We give here a  slightly different, and more elementary, proof than the one given in \cite{mf2}. To simplify the notation,  we consider only the case $\alpha=0$. We recall that $\alpha=0$ corresponds to a two-body interaction with zero-energy resonance (see, e.g., \cite{al}).

\n
Taking into account  definitions \eqref{dhab}, \eqref{ahab}, an eigenvector of $H_{0,\beta}$ associated to the negative eigenvalue $E=-\mu$, $\mu>0$, has the form $\mgm \xi$, where $\hat{\xi} \in D(T_{\beta})$ is a solution of the equation

\be\label{eqei1}
 \sqrt{\f{3}{4} |\p|^2 \!+\! \mu}  \,\, \hat{\xi}(\p) - \f{1}{\pi^2} \!\!\int\!\! d\p'\, \f{ \hat{\xi}(\p')}{|\p|^2 + |\p'|^2 + \p\cdot \p' +\mu} =0\,.
\ee

\n
We shall compute the rotationally invariant solutions  $\hat{\xi}= \hat{\xi}(|\p|)$ of \eqref{eqei1}.  Performing the angular integration in the second term of \eqref{eqei1}, one obtains the equation

\be\label{eqxi}
 \sqrt{\f{3}{4} p^2 \!+\! \mu}  \,\,p\, \hat{\xi}(p) - \f{2}{\pi} \int_0^{\infty} \!\!dp'\, p'\hat{\xi} (p') \, \log \f{p^2+p'^2 + pp'+\mu}{p^2 + p'^2 - pp'+\mu}  =0 \,.
\ee

\vs

\n
Let us introduce a change of the independent variable 
\be\label{chv}
p= \f{2\sqrt{\mu}}{\sqrt{3}} \, \sinh x\,, \;\;\;\;\;\;\;\;\;\; x= \log \left( \f{\sqrt{3} p}{2 \sqrt{\mu}} + \sqrt{\f{3p^2}{4\mu} +1} \right)
\ee
and define
\be\label{thx}
\theta(x) = 
\begin{cases} \mu \sinh x \cosh x \, \hat{\xi}\! \left( \f{2\sqrt{\mu}}{\sqrt{3}} \sinh x \!\right) \;\;\; &\text{for}\;\; x\geq0\;\;\;   \\
-\theta(-x) \;\;\; &\text{for}\;\;\; x<0 \,
\end{cases}
\ee
so that
\be\label{xith}
\hat{\xi}(p)= \f{2}{\sqrt{3}} \, \f{ \theta \! \left[ \log \left( \f{\sqrt{3} p}{2 \sqrt{\mu}} + \sqrt{\f{3p^2}{4\mu} +1} \right) \right]}{ p \,\sqrt{\f{3}{4} p^4 + \mu}} \,.
\ee

\n
The first term in \eqref{eqxi} in the new coordinates is 
\begin{align}\label{dixi}
\sqrt{\f{3}{4} p^2 \!+\! \mu}  \,\,p\, \hat{\xi}(p)
&= 
\sqrt{\mu\sinh^2 \!x + \mu} \;\, \f{2 \sqrt{\mu}}{\sqrt{3}} \sinh x \; \hat{\xi}\!\left( \f{2 \sqrt{\mu}}{\sqrt{3}} \sinh x \!\right) \!\nonumber\\
&= \f{2}{\sqrt{3}} \, \theta(x)\,.
\end{align}

\n
The other term in \eqref{eqxi} in the new coordinates reads
\begin{align}\label{ndxi}
&- \f{2}{\pi} \int_0^{\infty} \!\!dp'\, p'\hat{\xi} (p') \, \log \f{p^2+p'^2 + pp'+\mu}{p^2 + p'^2 - pp'+\mu}\nonumber\\
&= - \f{8}{3\pi} \int_0^{\infty} \!\!dy\, \theta(y)  \log \f{\sinh^2 x + \sinh^2 y + \sinh x \sinh y + 3/4}{\sinh^2 x + \sinh^2 y - \sinh x \sinh y + 3/4}\nonumber\\
&= - \f{8}{3\pi} \int_0^{\infty} \!\!dy\, \theta(y)  \log \f{ \left( 2 \cosh (x+y) -1\right) \left(2 \cosh (x-y) + 1\right)}{ \left( 2 \cosh (x+y) +1\right) \left(2 \cosh (x-y) - 1\right)}\nonumber\\
&= - \f{8}{3\pi} \int_{-\infty}^{+\infty} \!\!dy\, \theta(y)  \log \f{  2 \cosh (x-y) + 1}{  2 \cosh (x-y) - 1}
\end{align}
where in the last line we have used the extension $\theta(x)=-\theta(-x)$ for $x<0$. Therefore, equation \eqref{eqxi} for $\hat{\xi}(p)$ is transformed into  the following convolution equation for $\theta(x)$

\be
\theta(x) -\f{4}{\sqrt{3} \pi} \int_{-\infty}^{+\infty} \!\!\! dy \, \theta(y) \, \log \f{2 \cosh (x-y) +1}{2 \cosh (x-y)-1} =0\,.
\ee

\n
Finally, we compute the Fourier transform (see \cite{er}, p. 36) and we arrive at the equation for $\hat{\theta}$

\be\label{eqht}
\left( 1 - \f{8}{\sqrt{3}} \, \f{\sinh \f{\pi}{6} s }{ s\, \cosh \f{\pi}{2} s} \right) \! \hat{\theta}(s)=0 \,.
\ee

\n
Denote by $g(s)$ the function in parenthesis  in \eqref{eqht}. It is easy to see that $g$ is even, monotone increasing for $s>0$ and $g(s) \rightarrow 1$ for $s \rightarrow +\infty$. Moreover, $g(0)= 1- \f{4\pi}{3 \sqrt{3}} <0$ and we conclude that the equation $g(s)=0$ has two solutions $s=\pm s_0$, with $s_0>0$.   Since $\hat{\theta}$ is an odd function, the solution of \eqref{eqht} reads

\be
\hat{\theta} (s) = \delta(s-s_0) - \delta(s+s_0)
\ee

\n
apart from a multiplicative constant and therefore

\be
\theta(x) = \sin s_0 x\,.
\ee

\n
From \eqref{xith} we obtain the solution of equation \eqref{eqxi} for any $\mu >0$

\be\label{sotms}
\hat{\xi}_{\mu}(p) = \f{\sin \!\left[ s_0 \log \left( \f{\sqrt{3}p}{2\sqrt{\mu}} + \f{1}{\sqrt{\mu}}  \sqrt{\f{3}{4} p^2 + \mu} \, \right) \right] }{p \,\sqrt{ \f{3}{4} p^2 + \mu  }  }\, .
\ee

\n
We note that the solution \eqref{sotms} belongs to $L^2(\R^3)$ but it does not belong to $D(T^{\lambda})$ because of the  behavior $O(p^{-2})$ for $p \rightarrow \infty$. On the other hand we can find suitable values of $\mu$ such that the solution  belongs to $ D(T^{\lambda}_{\beta})$.

\n
Indeed, denoting $\ve = \f{4\mu}{3}\,p^{-2}$, we have

\begin{align}\label{asbe}
\hat{\xi}_{\mu}(p) &= \f{\sin \left\{ s_0 \left[ \log  \f{\sqrt{3}p}{\sqrt{\mu}} + \log \left( 1 + \f{\ve/2}{1 + \sqrt{1 + \ve}} \right)  \right]\right\} }{p^2 \,\sqrt{  1+ \ve  }  } \nonumber\\
&= \f{ \sin \left( s_0 \log p + \f{s_0}{2} \log \f{3}{\mu} \right)}{p^2 \sqrt{1 + \ve} } +\eta_1(p)  
\nonumber\\
& = \cos \left( \f{s_0}{2} \log \f{3}{\mu} \right) \f{\sin \left(s_0 \log p\right)}{p^2 +1} +
  \sin \left( \f{s_0}{2} \log \f{3}{\mu} \right) \f{\cos \left(s_0 \log p\right)}{p^2+ 1}+\eta_2(p)  \,
\end{align}
where $\eta_1, \eta_2 \in D(T^{\lambda})$. 
According to \eqref{dtb}, in order to have $\hat{\xi}_{\mu} \in D(T^{\lambda}_{\beta})$  we  impose the condition

\be\label{conort}
 \cos \left( \f{s_0}{2} \log \f{3}{\mu} \right) = \beta \sin \left( \f{s_0}{2} \log \f{3}{\mu} \right).
\ee

\n
Condition \eqref{conort} is satisfied if and only if $\mu$ is equal to

\be\label{mun}
\mu_n= 3 \, e^{- \f{2}{s_0} \cot^{-1} \beta} \, e^{ \f{2\pi}{s_0}  n } , \;\;\;\;\;\; n\in \mathbb{Z}\,.
\ee

\n
Thus we  obtain an infinite sequence of negative eigenvalues 
\be
E_n= - \mu_n \,, \;\;\;\;\;\; n\in \mathbb{Z}
\ee
with corresponding eigenvectors $\mathcal G^{\mu_n} \xi_{\mu_n}$, where $\hat{\xi}_{\mu_n}$ is given by \eqref{sotms}.

\n
We stress that  the model Hamiltonian $H_{0,\beta}$ exhibits the Efimov effect, i.e., there exists an infinite sequence of eigenvalues $E_n \rightarrow 0$ for $n\rightarrow -\infty$ satisfying  the (exact) geometrical law 
\be\label{geol}
\f{E_n}{E_{n+1}} = \, e^{- \f{2\pi}{s_0}}\,. 
\ee
On the other hand, one also has    $E_n \rightarrow - \infty$ for $n\rightarrow +\infty$, corresponding to the instability property known as  Thomas effect. 


\section{{\bf Regularized zero-range interactions}}

In this section we propose a model Hamiltonian for three bosons with zero-range interactions regularized around the triple coincidence point $\x_1=\x_2=\x_3=0$ in such a way to avoid the Thomas effect, i.e., with a spectrum bounded from below. We follow the proposal contained in \cite{al3}  which, in turn, essentially coincides with the already mentioned suggestion discussed at the end of  \cite{mf1}.

\n
In the first part of \cite{al3} the authors announce an interesting mathematical result on the Efimov effect. They consider  a three-particle system with Hamiltonian $H$, 
with two-body, spherically symmetric, short range potentials such that  at least two of the two-body subsystems have zero-energy resonances (i.e. infinite scattering length). They claim that  $H$  has infinitely many spherically symmetric bound states with energy $E_n \rightarrow 0$ such that
\be\label{Efi}
\lim_{n\rightarrow \infty} \f{E_n}{E_{n+1}} = e^{-\f{2 \pi}{ \sigma}}
\ee
where $\sigma>0$ depends only on the mass ratios (and coincides with $s_0$ in \eqref{geol} in the case of three identical bosons). The  result should follow \say{from a detailed study of the asymptotic behavior of the action of the scaling group in the spaces of two and three-body Hamiltonians restricted to functions invariant under the natural action of $SO(3)$}, where the scaling  group is given by 
\be
H \rightarrow H_{\ve} = \f{1}{\ve^2} U_{\ve} H U^{-1}_{\ve} \,, \;\;\;\;\;\; (U_{\ve} \psi)(\x)= \f{1}{\ve^{3/2}} \psi \Big(\f{\x}{\ve} \Big)\,, \;\;\;\;\;\; \ve >0\,. 
\ee
Roughly speaking, for $\ve \rightarrow 0$  
the rescaled Hamiltonian should converge to the zero-range model $H_{0, \beta}$. Since  
for $H_{0, \beta}$ the Efimov effect, together with the property  $\f{E_n}{E_{n+1}} = e^{-2 \pi/\sigma}$, is explicitly verified, one should infer 
 the result for $H$. 
Unfortunately, this program has not been realized and it remains as a challenging open problem.

\n
In the last part of the paper the authors add an interesting remark  for our aim  concerning the  construction of a reasonable three-body Hamiltonian with zero-range interactions which is bounded from below.

\n
Indeed, they claim that one can consider a Hamiltonian with zero-range interactions where the  zero-range force between two particles depends on the position of the third one. If such three-body force is suitably chosen then the Hamiltonian is bounded from below.

\n
The  proposed recipe can be rephrased in the following way: in the boundary condition \eqref{bc23} one replaces the  constant $\alpha$  with a position-dependent term 
\be\label{alfaa}
\alpha_A(\y) =\alpha+  \f{\delta}{|\y|}\,, \;\;\;\;\;\;\;\;\; \alpha, \delta \in \R\,.
\ee
In the case of equal masses, they affirm that for 
\be\label{boundde}
\delta > \f{2}{\pi^2} \Big(  \f{4\pi}{3 \sqrt{3}} -1  \Big) 
\ee
the corresponding zero-range Hamiltonian  is bounded from below. Also the proof of this statement is postponed to a forthcoming paper which has never been published.

\n
Let us briefly comment on the above proposal. The  replacement of the constant $\alpha$ with a function $\alpha(\y)$, with $\alpha(\y) \rightarrow \infty$ for $|\y| \rightarrow 0$, has a  reasonable physical meaning. It means that when the positions of the three particles coincide, i.e., for $\x=\y=0$, the two-body interactions are switched off ($\alpha=\infty$ means no interaction). In this way one compensates the tendency of the three interacting particles to \say{fall in the center}. On the other hand, the specific choice of the function  $\alpha_A(\y) $ is not explained in the paper but one can imagine that  such a function allows some explicit computations (as in the case of the choice \eqref{defk} of the operator $K$ in  \cite {mf1}). 

\n
It is also natural to compare the two proposals contained in \cite{mf1} and  \cite{al3}. 
It is immediate to realize  that the two proposals essentially coincide in the sense that  one is the Fourier transform of the other, i.e., 
\be\label{relma}
(\alpha_A \xi)^{\!\wedge} (\p)= (\alpha_M \hat{\xi})(\p)\,, \,\;\;\;\;\; \text{if} \;\;\; \delta =  2 \pi^2 \gamma\,.
\ee
It is also important to stress that only  the asymptotic behavior of $K(\p)$ for $|\p| \rightarrow \infty$ (see \eqref{defk}) is relevant to obtain a lower bounded Hamiltonian, as it is correctly pointed out in \cite{mf1}. Correspondingly, it must be sufficient to require  only  the asymptotic behavior  $\delta |\y|^{-1} + O(1)$ for $|\y| \rightarrow 0$ for the position-dependent strength of the interaction.

\n
In conclusion, following the (common) idea proposed in \cite{mf1} and \cite{al3}, we introduce the Hamiltonian $H_{\tilde{\alpha}}$ characterized by the  boundary condition
\beq\label{bc23y}
\psi(\x,\y) = \f{\xi(\y)}{|\x|} + \tilde{\alpha}(|\y|) \, \xi(\y) + o(1) \,, \;\;\;\; \text{for}\;\; |\x|\rightarrow 0\, \;\; \text{and}\;\;\y \neq 0
\eeq
where
\beq\label{alfafin}
\tilde{\alpha} \;:\; \R_+ \rightarrow \R\,, \;\;\;\;\;\;\tilde{ \alpha}(r)= \alpha + \f{\delta}{r} \chi^\ell(r)
\eeq
with
\beq
 \alpha \in \R, \;\;\;\; \delta >0, \;\;\;\;\ell >0,   \quad \chi^\ell(r) =  \left\{ \begin{aligned} &1 \qquad & r \leq \ell \\
							 &  0 & r > \ell &\,.
			         \end{aligned}\right.  
\eeq 

\n
More precisely, we define the Hamiltonian  as follows
\begin{align}
&D(H_{\tilde{\alpha}}) = \Big\{\psi \in L_s^2(\R^6) \;|\; \psi=w^{\lambda} + \mg \xi, \; w^{\lambda} \in H^2(\R^6), \; \xi \in H^1(\R^3), \label{dh}\nonumber\\
&\;\;\;\;\;\;\;\;\;\;\; \;\;\;\; \;\;\; (\tilde{\alpha} \, \xi )^{\!\wedge}  (\p)+\big( T^{\lambda}  \hat{\xi} \,\big) (\p) \;=\; \big(w^{\lambda}\big|_{\x =0}\big)^{\!\wedge} (\p) \Big\}, \\
&(H_{\tilde{\alpha}} + \lambda) \psi = (H_0 + \lambda ) w^{\lambda}\,. \label{ah} 
\end{align}

\n
In a paper in preparation (\cite{bcft}) it will be proved  the following result.

\begin{pro}
Let $\delta > \delta_0$, where
\be
\delta_0 = \f{\sqrt{3}}{\pi} \left( \f{4\pi}{3\sqrt{3}}-1 \right).
\ee
 Then for any  $\ell \in (0,+\infty]$ and $\alpha \in \R$  the operator \eqref{dh}, \eqref{ah} is s.a. and bounded from below. 

\end{pro}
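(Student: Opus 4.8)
The plan is to realise $H_{\tilde\alpha}$ as the self-adjoint operator associated with its energy form and to reduce the statement to a lower bound, uniform in $\lambda$ large, for a single quadratic form in the charge $\xi$. For $\lambda>0$ and $\psi=w^\lambda+\mg\xi$ as in \eqref{dh}, using the boundary condition in \eqref{dh} and the fact that $(H_0+\lambda)\mg\xi$ is a measure supported on $\Sigma$, one gets the ``orthogonal'' identity
\begin{align*}
\mathcal F_{\tilde\alpha}^{\lambda}[\psi]+\lambda\|\psi\|^2 &=\|(H_0+\lambda)^{1/2}w^\lambda\|^2+\Phi^{\lambda}_{\tilde\alpha}[\xi],\\
\Phi^{\lambda}_{\tilde\alpha}[\xi] &=\langle\hat\xi,T^{\lambda}\hat\xi\rangle+\int_{\R^3}\tilde\alpha(|\y|)\,|\xi(\y)|^2\,d\y,
\end{align*}
where $\mathcal F^{\lambda}_{\tilde\alpha}$ is the form of \eqref{ah}--\eqref{dh}. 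Thus $\mathcal F^{\lambda}_{\tilde\alpha}\ge-\lambda$ as soon as $\Phi^{\lambda}_{\tilde\alpha}\ge 0$, and $\mathcal F^{\lambda}_{\tilde\alpha}$ is closable with closure on the natural space $\{w^\lambda\in H^1(\R^6),\ \xi\in H^{1/2}(\R^3)\}$ as soon as $\Phi^{\lambda}_{\tilde\alpha}[\xi]+c\,\|\xi\|_{L^2}^2\gtrsim\|\xi\|^2_{H^{1/2}}$. The fractional Hardy inequality $\int_{\R^3}|\y|^{-1}|\xi|^2\le C[\xi]^2_{H^{1/2}}$ makes the new term $H^{1/2}$-bounded, so the form domain is unchanged. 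Two cheap reductions: passing from $\ell=+\infty$ to a finite $\ell$ subtracts $\delta\int_{|\y|>\ell}|\y|^{-1}|\xi|^2\le(\delta/\ell)\|\xi\|_{L^2}^2$, and the constant $\alpha$ contributes $\alpha\|\xi\|_{L^2}^2$; both are $L^2$-bounded, so it suffices to treat $\alpha=0$, $\ell=+\infty$, with the final constants allowed to depend on $\alpha,\ell,\delta,\lambda$.

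Next I would decompose $\xi$ into spherical harmonics; the diagonal and off-diagonal parts of $T^{\lambda}$ (from \eqref{opga}) and the Hardy form act block-diagonally in the angular index. On every sector of angular momentum $\ge 1$ the off-diagonal form is subordinate to the diagonal one with relative bound strictly less than $1$ (the Thomas/Efimov mechanism being purely $s$-wave: the analogue there of the constant $4\pi/(3\sqrt3)$ is $<1$), so $\langle\hat\xi,T^{\lambda}\hat\xi\rangle\gtrsim\|\xi\|^2_{H^{1/2}}$ already for $\delta=0$ and the nonnegative Hardy term does no harm. All the difficulty is in the $s$-wave. At $\lambda=0$ the three forms are scale-covariant of the same homogeneity and are simultaneously diagonalised by the Mellin transform: after $|\p|=e^{t}$ the off-diagonal and the Hardy parts become convolutions with the even kernels $\log\frac{2\cosh u+1}{2\cosh u-1}$ and $\log\coth\frac{|u|}{2}$, whose Fourier transforms are proportional to $\frac{\sinh(\pi s/6)}{s\cosh(\pi s/2)}$ and $\frac{\tanh(\pi s/2)}{s}$, and the diagonal part contributes a constant. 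Hence, modulo the Plancherel bookkeeping, the $s$-wave symbol of $\langle\hat\xi,T^{0}\hat\xi\rangle+\delta\int|\y|^{-1}|\xi|^2$ is $A\,g(s)+\delta\,B\,\frac{\tanh(\pi s/2)}{s}$, with $A,B>0$ explicit and $g(s)=1-\frac{8}{\sqrt3}\frac{\sinh(\pi s/6)}{s\cosh(\pi s/2)}$ the function of \eqref{eqht}.

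Since $g$ is even, increasing on $\{s>0\}$, with $g(0)=1-\frac{4\pi}{3\sqrt3}<0$ and $g(+\infty)=1$, while $\frac{\tanh(\pi s/2)}{s}$ is even, decreasing on $\{s>0\}$, with value $\pi/2$ at $s=0$, the symbol at $s=0$ equals $A\big(g(0)+\delta B\pi/2\big)$; the bookkeeping is such that this vanishes exactly at $\delta=\delta_0=\frac{\sqrt3}{\pi}\big(\frac{4\pi}{3\sqrt3}-1\big)$. Granting the pointwise inequality $\delta_0 B\,\frac{\tanh(\pi s/2)}{s}+A\,g(s)\ge 0$ for all $s\in\R$ --- equivalently, that $s\mapsto -g(s)\big/\frac{\tanh(\pi s/2)}{s}$ attains its maximum at $s=0$ --- for $\delta>\delta_0$ the $s$-wave symbol is bounded below by a positive multiple of its value at $s=\infty$, hence $\langle\hat\xi,T^{0}\hat\xi\rangle+\delta\int|\y|^{-1}|\xi|^2\gtrsim[\xi]^2_{H^{1/2}}$ in the $s$-wave as well. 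Reinstating $\lambda>0$ is a soft infrared/ultraviolet split: the extra $\lambda$ in $T^{\lambda}_{\mathrm{diag}}$ gives $\gtrsim\sqrt\lambda\,\|\xi\|^2_{L^2}$, the $\lambda$ inside the off-diagonal kernel is negligible where $|\p|\gg\sqrt\lambda$ (there the scale-invariant estimate applies), and on bounded momenta the off-diagonal term, and the $\alpha$- and $\ell$-corrections, are dominated by $\sqrt\lambda\,\|\xi\|^2_{L^2}$ once $\lambda$ is large; a localisation glues the two regions. So there is $\lambda_0$ with $\Phi^{\lambda}_{\tilde\alpha}[\xi]\gtrsim\|\xi\|^2_{H^{1/2}}$ for $\lambda\ge\lambda_0$; by the identity of the first paragraph $\mathcal F^{\lambda}_{\tilde\alpha}$ is then closed and $\ge-\lambda_0$, the first representation theorem produces a self-adjoint operator bounded below, and a routine integration by parts identifies it with $(H_{\tilde\alpha},D(H_{\tilde\alpha}))$ of \eqref{dh}--\eqref{ah}.

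The main obstacle is the quantitative $s$-wave estimate. Two things must be pinned down: first, the exact Plancherel bookkeeping turning $A$ and $B$ into the clean threshold $\delta_0=\frac{\sqrt3}{\pi}\big(\frac{4\pi}{3\sqrt3}-1\big)$; and, more delicately, the global inequality $\delta_0 B\,\frac{\tanh(\pi s/2)}{s}+A\,g(s)\ge 0$ for every real $s$, i.e. that $s=0$ really realises the worst balance between the attractive off-diagonal kernel and the repulsive Hardy kernel --- if it failed at some intermediate $s$, the true stability threshold would exceed $\delta_0$. This rests on the two explicit transcendental functions above and, though elementary, needs careful monotonicity/convexity estimates (or a controlled numerical verification). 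The remaining technical points --- rigorous justification of the Mellin diagonalisation, the infrared/ultraviolet localisation, and the uniform control of the angular-momentum $\ge1$ sectors --- are routine by comparison.
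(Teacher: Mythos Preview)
Your proposal follows the very route the paper indicates. The paper itself does not give a complete proof here: it records the identity
\[
(\psi,(H_{\tilde\alpha}+\lambda)\psi)=\|(H_0+\lambda)^{1/2}w^\lambda\|^2+12\pi\,\Phi^\lambda_{\tilde\alpha}(\xi)
\]
(you dropped the harmless factor $12\pi$), fixes the form domain $\{w^\lambda\in H^1(\R^6),\ \xi\in H^{1/2}(\R^3)\}$, and defers the actual proof that $\Phi^\lambda_{\tilde\alpha}$ is closed and bounded below to \cite{bcft}. Your reductions to $\alpha=0$, $\ell=+\infty$ and to the $s$-wave are sound, and after the bookkeeping the $s$-wave symbol you obtain is exactly the function in \eqref{fus}, namely $g(s)+\frac{2\delta}{\sqrt3}\,\frac{\tanh(\pi s/2)}{s}$, whose value at $s=0$ vanishes precisely at $\delta=\delta_0$.

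Two remarks. First, the ``main obstacle'' you flag --- the pointwise positivity of this symbol for \emph{all} $s$ once $\delta>\delta_0$ --- is actually settled in the paper, in the proof of Proposition~6.1: setting $F(s)=\sqrt3\,s\cosh\frac{\pi s}{2}+2\delta\sinh\frac{\pi s}{2}-8\sinh\frac{\pi s}{6}$ one checks $F'(s)>\big(\sqrt3+\pi\delta_0-\frac{4\pi}{3}\big)\cosh\frac{\pi s}{6}=0$, hence $F(s)>F(0)=0$ for $s>0$. So your ``delicate'' step is in fact a one-line derivative estimate, and the maximum of $-g(s)\big/\frac{\tanh(\pi s/2)}{s}$ is indeed at $s=0$.

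Second, your Mellin-at-$\lambda=0$ plus IR/UV localisation can be replaced by something cleaner. The substitution $p=\frac{2\sqrt\lambda}{\sqrt3}\sinh x$ used in Sections~4 and~6 turns all three $s$-wave pieces --- diagonal, off-diagonal, and the Hardy contribution $\delta|\y|^{-1}$ --- into pure convolutions in $x$ for every $\lambda>0$ (see \eqref{dixi}, \eqref{ndxi}, \eqref{dexi}); Fourier in $x$ then yields directly the multiplier \eqref{fus}, and its positivity gives $\Phi^\lambda_{\tilde\alpha}\ge 0$ on radial charges with no cutoff argument needed.
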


\n
The result shows that it is sufficient to add a three-body force with an arbitrary small (but different from zero) range to avoid the collapse. We stress that it would be interesting  to prove that  boundedness from below is preserved taking  \say{some suitable limit $\ell \rightarrow 0$}. 

\n
The proof is based on the analysis of the quadratic form associated to $H_{\tilde{\alpha}}$. 
Taking into account of  \eqref{dh}, \eqref{ah},  by an explicit computation for $\psi \in D(H_{\tilde{\alpha}})$ one obtains
\ba
(\psi, (H_{\tilde{\alpha}} + \lambda) \psi)&=&(\psi, (H_0 + \lambda) w^{\lambda})  \!=\! (w^{\lambda}, (H_0 + \lambda) w^{\lambda}) \! +\! (\mathcal G^{\lambda} \xi, (H_0 + \lambda) w^{\lambda}) \nonumber\\
&=& (w^{\lambda}, (H_0 + \lambda) w^{\lambda})  + 12 \pi \left[ (\xi, \tilde{\alpha}\, \xi) + (\hat{\xi}, T^{\lambda} \hat{\xi}) \right]\,. 
\ea

\n
Hence we define the quadratic form
\ba
F_{\tilde{\alpha}}(\psi) =(w^{\lambda}, (H_0 + \lambda) w^{\lambda})   - \lambda \|\psi\|^2  + 12 \, \pi \,  \Phi^{\lambda}_{\tilde{\alpha}} (\xi)
\ea
where
\be
\Phi^{\lambda}_{\tilde{\alpha}} (\xi)= \int\!\!d\y\, \tilde{\alpha} (|\y|) |\xi(\y)|^2 + \int\!\! d\p\, \overline{\hat{\xi} (\p)} (T^{\lambda} \hat{\xi} ) (\p) 
\ee
and
\be
D(F_{\tilde{\alpha}})= \Big\{    \psi \in L_s^2(\R^6) \;|\; \psi=w^{\lambda} + \mg \xi, \; w^{\lambda} \in H^1(\R^6), \; \hat{\xi} \in  H^{1/2} (\R^3) \Big\} \,.
\ee
The proof proceeds taking  such a quadratic form as starting point and proving  that it is closed and bounded from below. Therefore it uniquely defines a s.a. and bounded from below operator which coincides with \eqref{dh}, \eqref{ah}.


\section{On the  negative eigenvalues}

In this section we consider the eigenvalue problem for  $H_{\tilde{\alpha}}$ in the special case
\be
\alpha=0, \;\;\;\;\;\; \ell =+ \infty \,.
\ee
As we already noticed in the case of the TMS Hamiltonian, an eigenvector  associated to the negative eigenvalue $E=-\mu$, $\mu>0$, has the form $\mgm \xi$, where $\xi \in H^1(\R^3)$ is a solution of the equation

\be\label{eqei1?}
\f{\delta}{2 \pi^2} \int\!\! d\p'\, \f{\hat{\xi}(\p')}{|\p - \p'|^2}  +  \sqrt{\f{3}{4} |\p|^2 \!+\! \mu}  \,\, \hat{\xi}(\p) - \f{1}{\pi^2} \!\!\int\!\! d\p'\, \f{ \hat{\xi}(\p')}{|\p|^2 + |\p'|^2 + \p\cdot \p' +\mu} =0\,.
\ee
where the first term in \eqref{eqei1?} is the Fourier transform of $\delta |\y|^{-1} \xi(\y)$. 

\n
Proceeding as in section 2, we consider  the rotationally invariant case  $\hat{\xi}= \hat{\xi}(|\p|)$.  
Performing the angular integration one obtains the equation

\begin{align}\label{eqxi?}
 \!\!\f{\delta}{\pi} \!\int_0^{\infty} \!\!dp'\, p'\hat{\xi}(p')  \log \f{p+p'}{|p-p'|} &+ \sqrt{\f{3}{4} p^2 \!+\! \mu}  \,\,p\, \hat{\xi}(p) \nonumber\\
& - \f{2}{\pi} \int_0^{\infty} \!\!dp'\, p'\hat{\xi} (p') \, \log \f{p^2+p'^2 + pp'+\mu}{p^2 + p'^2 - pp'+\mu}  =0 \,.
\end{align}

\n
In the following we prove that for $\delta>\delta_0$  there are no  solutions of \eqref{eqxi?} and therefore the Hamiltonian has no negative eigenvalues corresponding to rotationally invariant solutions of \eqref{eqei1?}. The main point of the proof is that the l.h.s. of \eqref{eqxi?} can be diagonalized by the same change of coordinates   used in section 2 for the TMS Hamiltonian.

\begin{pro}
Let $\delta > \delta_0$. Then equation \eqref{eqxi?} has only the trivial solution. 
\end{pro}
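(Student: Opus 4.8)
The plan is to reduce \eqref{eqxi?} to a convolution equation on the line by the very same change of variables \eqref{chv}--\eqref{thx} used above for the TMS equation, and then to pass to Fourier space: a nontrivial solution can exist only if a certain explicit even symbol $g_\delta$ vanishes somewhere, and the content of the statement is that $g_\delta>0$ everywhere once $\delta>\delta_0$.

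Concretely, the last two terms of \eqref{eqxi?} transform exactly as in \eqref{dixi}--\eqref{ndxi}: the diagonal term becomes $\tfrac{2}{\sqrt3}\theta(x)$, and the TMS term becomes $-\tfrac{8}{3\pi}\int_{\mathbb R}dy\,\theta(y)\log\tfrac{2\cosh(x-y)+1}{2\cosh(x-y)-1}$ after using $\theta(-x)=-\theta(x)$. For the new term the key identities are $p'\hat\xi(p')\,dp'=\tfrac43\,\theta(y)\,dy$ and, via the sum-to-product formulas, $\log\tfrac{p+p'}{|p-p'|}=\log|\tanh\tfrac{x+y}{2}|-\log|\tanh\tfrac{x-y}{2}|$; invoking once more the odd extension of $\theta$, this first term turns into a genuine difference kernel, so that \eqref{eqxi?} becomes
\[
\theta(x)+\frac{2\delta}{\sqrt3\,\pi}\int_{\mathbb R}dy\,\theta(y)\,\log\coth\tfrac{|x-y|}{2}-\frac{4}{\sqrt3\,\pi}\int_{\mathbb R}dy\,\theta(y)\,\log\frac{2\cosh(x-y)+1}{2\cosh(x-y)-1}=0 .
\]
Taking Fourier transforms, using the transform already employed to derive \eqref{eqht} together with the classical formula for the Fourier transform of $\log\coth\tfrac{|x|}{2}$, namely $\tfrac{\pi}{s}\tanh\tfrac{\pi s}{2}$, this becomes $g_\delta(s)\,\hat\theta(s)=0$ with
\[
g_\delta(s)=g(s)+\frac{2\delta}{\sqrt3}\,\frac{\tanh\frac{\pi s}{2}}{s},\qquad g(s)=1-\frac{8}{\sqrt3}\,\frac{\sinh\frac{\pi s}{6}}{s\cosh\frac{\pi s}{2}},
\]
$g$ being the even function of \eqref{eqht}, whose only real zeros are $\pm s_0$.

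It remains to prove $g_\delta>0$ on $\mathbb R$ when $\delta>\delta_0$. Since $\tanh(\tfrac{\pi s}{2})/s>0$ for every real $s$, we have $g_\delta=g_{\delta_0}+\tfrac{2}{\sqrt3}(\delta-\delta_0)\,\tanh(\tfrac{\pi s}{2})/s$, so it is enough to show $g_{\delta_0}\ge 0$. Both summands of $g_{\delta_0}$ are even, so I restrict to $s\ge 0$ and multiply by the positive factor $s\cosh\tfrac{\pi s}{2}$, obtaining
\[
\phi(s):=s\cosh\tfrac{\pi s}{2}+\Bigl(\tfrac{8}{3\sqrt3}-\tfrac{2}{\pi}\Bigr)\sinh\tfrac{\pi s}{2}-\tfrac{8}{\sqrt3}\sinh\tfrac{\pi s}{6},
\]
where $\tfrac{8}{3\sqrt3}-\tfrac{2}{\pi}=\tfrac{2\delta_0}{\sqrt3}$. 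One checks that $\phi(0)=\phi'(0)=0$ — the vanishing of $\phi'(0)$ being precisely the relation that defines $\delta_0$ — while
\[
\phi''(s)=\Bigl(\tfrac{\pi}{2}+\tfrac{2\pi^2}{3\sqrt3}\Bigr)\sinh\tfrac{\pi s}{2}+\tfrac{\pi^2 s}{4}\cosh\tfrac{\pi s}{2}-\tfrac{2\pi^2}{9\sqrt3}\sinh\tfrac{\pi s}{6}\ \ge\ 0\qquad(s\ge 0),
\]
because $\sinh\tfrac{\pi s}{2}\ge\sinh\tfrac{\pi s}{6}\ge 0$ and $\tfrac{\pi}{2}+\tfrac{2\pi^2}{3\sqrt3}>\tfrac{2\pi^2}{9\sqrt3}$. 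Thus $\phi$ is convex on $[0,\infty)$ with $\phi(0)=\phi'(0)=0$, hence $\phi\ge 0$, hence $g_{\delta_0}\ge 0$, and $g_\delta>0$ for $\delta>\delta_0$. Since $g_\delta$ is continuous and $g_\delta(s)\to 1$ as $|s|\to\infty$, in fact $\inf_{s}g_\delta(s)>0$; then $g_\delta\,\hat\theta=0$ forces $\hat\theta\equiv 0$, i.e.\ $\theta\equiv 0$, and therefore $\xi=0$.

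The step I expect to be the real obstacle is precisely the positivity of the symbol: the reduction to the threshold $\delta=\delta_0$ and the choice of the clearing factor $s\cosh\tfrac{\pi s}{2}$, which turns $g_{\delta_0}$ into the combination $\phi$ of elementary hyperbolic functions whose convexity can be read off from $\sinh\tfrac{\pi s}{2}\ge\sinh\tfrac{\pi s}{6}$. One must also be slightly careful in the functional-analytic reduction — checking that \eqref{chv}, \eqref{thx} together with the odd extension yield a genuine square-integrable (or at worst tempered) convolution equation equivalent to \eqref{eqxi?}, so that the Fourier transform is legitimate and can be divided out by $g_\delta$ — but this is routine once one observes that both kernels $\log\coth\tfrac{|x|}{2}$ and $\log\tfrac{2\cosh x+1}{2\cosh x-1}$ belong to $L^1(\mathbb R)$.
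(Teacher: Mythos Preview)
Your proof is correct and follows essentially the same route as the paper: the same change of variables \eqref{chv}--\eqref{thx} and odd extension turn \eqref{eqxi?} into the same convolution equation, and the Fourier transform produces the identical symbol $g_\delta(s)=1+\dfrac{2\delta\sinh\frac{\pi s}{2}-8\sinh\frac{\pi s}{6}}{\sqrt3\,s\cosh\frac{\pi s}{2}}$. The only difference is in the elementary positivity step: the paper shows directly that $F(s):=\sqrt3\,s\cosh\frac{\pi s}{2}+2\delta\sinh\frac{\pi s}{2}-8\sinh\frac{\pi s}{6}$ satisfies $F'(s)>0$ for $\delta>\delta_0$ (one derivative, strict inequality for every such $\delta$), whereas you first reduce to the threshold $\delta=\delta_0$ by monotonicity in $\delta$ and then use a convexity argument $\phi(0)=\phi'(0)=0$, $\phi''\ge0$; this buys you the sharp statement $g_{\delta_0}\ge0$ at the borderline, at the cost of one more differentiation.
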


\begin{proof}
Using \eqref{chv}, \eqref{thx}, for the first term in \eqref{eqxi?} we have

\begin{align}\label{dexi}
&\f{\delta}{\pi} \!\int_0^{\infty} \!\!dp'\, p'\hat{\xi}(p')  \log \f{p+p'}{|p-p'|} =
\f{4 \delta}{3 \pi} \int_0^{\infty} \!\!dy\, \theta(y) \log \left| \f{\sinh x + \sinh y}{\sinh x - \sinh y}\right| \nonumber\\
&= \f{4 \delta}{3 \pi} \int_0^{\infty} \!\!dy\, \theta(y) \log \left| \f{\sinh \f{x+y}{2} \cosh \f{x-y}{2}}{\cosh \f{x+y}{2} \sinh \f{x-y}{2}}\right| \nonumber\\
&= \f{4 \delta}{3 \pi} \int_0^{\infty} \!\!dy\, \theta(y) \log \left| \f{ \cosh \f{x-y}{2}}{ \sinh \f{x-y}{2}}\right| +
\f{4 \delta}{3 \pi} \int_0^{\infty} \!\!dy\, \theta(y) \log \left| \f{\sinh \f{x+y}{2} }{\cosh \f{x+y}{2} }\right| \nonumber\\
&=  \f{4 \delta}{3 \pi} \int_{-\infty}^{+\infty} \!\!dy\, \theta(y) \log \left| \coth \f{x-y}{2} \right|
\end{align}
where in the last line we have used the extension $\theta(x)=-\theta(-x)$ for $x<0$. Hence, by \eqref{dixi}, \eqref{ndxi} and \eqref{dexi}, in the new variables equation \eqref{eqxi?} reads
\begin{align}
 \theta(x)+ \f{2 \delta}{\sqrt{3} \pi} \int_{-\infty}^{+\infty} \!\!dy\, \theta(y) &\log \left| \coth \f{x-y}{2} \right| \nonumber\\ 
& -\f{4}{\sqrt{3} \pi} \int_{-\infty}^{+\infty} \!\!\! dy \, \theta(y) \, \log \f{2 \cosh (x-y) +1}{2 \cosh (x-y)-1} =0\,.
\end{align}
We note that 
\begin{align}
& \f{1}{\sqrt{2\pi}} \int \!\!dx\, e^{-isx} \log \left| \coth \f{x}{2} \right| =
\sqrt{\f{2}{\pi}} \int_0^{\infty} \!\! dx \, \cos sx \, \log \!\left( \coth \f{x}{2} \right) \nonumber\\
& = \sqrt{\f{2}{\pi}} \int_0^{\infty} \!\! dx \, \cos sx \, \log \!\left( 1 + e^{-x} \right) - \sqrt{\f{2}{\pi}} \int_0^{\infty} \!\! dx \, \cos sx \, \log \!\left( 1 - e^{-x} \right)\,.
\end{align}
Then we use \cite{GR}, page 582, and \eqref{eqht} to write the equation for the Fourier transform of $\theta$ 
\be\label{fus}
\left( 1 + 2 \,\, \f{\delta \sinh \f{\pi}{2} s - 4 \sinh \f{\pi}{6} s }{\sqrt{3} \, s  \cosh \f{\pi}{2} s } \right) \! \hat{\theta} (s) =0 \,.
\ee

\n
It remains to show that the function in parenthesis in \eqref{fus} is strictly positive for $\delta>\delta_0$.  
Note that  the function  is even and then we can consider $s\geq 0$. For $\delta >\delta_0$   it is positive in $s=0$. Moreover, if we denote 
\be
F(s)= \sqrt{3} \, s  \cosh \f{\pi}{2} s + 2\delta \sinh \f{\pi}{2} s - 8 \sinh \f{\pi}{6} s \,,
\ee
 for $\delta>\delta_0$ we have  
\begin{align}
F'(s) &= (\sqrt{3} + \pi \delta) \cosh \f{\pi}{2} s + \f{\sqrt{3} \,\pi}{2} \,s \sinh \f{\pi}{2} s - \f{4\pi}{3} \cosh \f{\pi}{6} s \nonumber\\
& > (\sqrt{3} + \pi \delta_0) \cosh \f{\pi}{2} s - \f{4\pi}{3} \cosh \f{\pi}{6} s \nonumber\\
&\geq \left(\sqrt{3} + \pi \delta_0 - \f{4\pi}{3} \right) \cosh \f{\pi}{6} s =0 \,
\end{align}
so that $F(s) >0$ for any $s$. It follows that equation \eqref{fus}, and then equation \eqref{eqxi?}, has only the trivial solution and this conclude the proof of the proposition. 
\end{proof}

\section{Acknowledgements by way of conclusion}
We have many reasons for being grateful to Sergio Albeverio, mainly for his lasting friendship and for all the affection and help he has been giving us for so many years. Among all these good reasons, one concerns the support he provided to our scientific activity.\\
As it is very well known, he was one of the main players in the scientific project aimed to explicitly construct models in relativistic quantum field theory describing interacting  bosons, a project that never reached a final result in four dimensional space-time.  The formal non-relativistic limits of the relativistic models under study in constructive quantum field theory (see \cite{dim} for the only rigorous result in the investigation of such limits) describe particles in three dimensions interacting via zero-range forces.  This was the reason why, together with many other scientific interests, he got involved in the theory of point interaction Hamiltonians.\\
In the early nineties 
he suggested to one of us (A.T., young postdoc in Bochum at that time)  to investigate further the spectral structure of the point interaction Hamiltonian for a three-boson system, following the suggestions given in \cite{al3}. With some delay, together with a group of other younger italian researchers, we figured out better those suggestions and started working in this direction.\\
In this contribution we outlined the history of the main achievements in this research field and we sketched our first results.


\end{document}